\documentclass[12pt]{article}
\usepackage{amsmath}
\usepackage{graphicx,psfrag,epsf}
\usepackage{enumerate}
\usepackage{psfrag,epsf}
\usepackage{url} % not crucial - just used below for the URL
\usepackage{amsmath,amssymb,amsfonts,amsthm,mathtools,mathrsfs}
\usepackage{array}
\usepackage{makecell}

\usepackage{tikz}
\usepackage{pgfplots}
\usepgfplotslibrary{fillbetween}
\pgfplotsset{compat=1.16}

\usepackage{bm,bbm}
\usepackage{color}
\usepackage{subfigure}
\usepackage{algorithm,algpseudocode}
\usepackage[symbol]{footmisc}
\usepackage{scalefnt}
\usepackage{authblk} % author and affiliations
\usepackage{multirow,centernot}
\usepackage[colorlinks=true, citecolor=blue, urlcolor=blue]{hyperref}
\usepackage{natbib}
\usepackage{dsfont}
\usepackage[title]{appendix}
\usepackage{newtxtext,newtxmath} % Times Roman fonts (text and math)

\input cyracc.def

\usepackage[left=1in,top=1.1in,right=0.5in,bottom=1in]{geometry}

\theoremstyle{definition}

\newtheorem{theorem}{Theorem}[section]

\newtheorem{definition}{Definition}[section]

\newtheorem{proposition}[theorem]{Proposition}
\newtheorem{remark}[theorem]{Remark}
\makeatletter
\def\@seccntformat#1{\@ifundefined{#1@cntformat}%
	{\csname the#1\endcsname\quad}%      default
	{\csname #1@cntformat\endcsname}%    enable individual control
}
\makeatother

\newif\ifShowComments
\ShowCommentstrue
\def\strutdepth{\dp\strutbox}
\def\druk#1{\strut\vadjust{\kern-\strutdepth
        {\vtop to \strutdepth{%
                \baselineskip\strutdepth\vss
                        \llap{\hbox{#1}\quad}\null}}}}

\title{\bf
Unbiased estimation of normalized scale-invariant indices
under the gamma distribution
}

\author{
\text{Roberto Vila}$^{1}$\thanks{Corresponding author: Roberto Vila, email: {rovig161@gmail.com}
}
, %\,\, and
\text{Helton Saulo}$^{1,2}$ 
\,\,and
\text{Felipe Quintino}$^{1}$
\\
{\small $^{1}$ Department of Statistics, University of Brasilia, Brasilia, Brazil}\\
{\small $^{2}$ Department of Economics, Federal University of Pelotas, Pelotas, Brazil}\\
}
\begin{document}
	\maketitle 	
	\begin{abstract}
We introduce a broad class of normalized scale-invariant indices (NPRIs) generated by homogeneous functions and encompassing several well-known measures, including the Gini coefficient, generalized Gini indices, entropy-based measures, and variability indices. Explicit expressions are obtained for these indices under gamma populations. Exploiting the independence between the total sum and the associated Dirichlet proportions, we derive a simple unbiased estimator based on a U-statistic. The resulting estimator is shown to be unbiased for any NPRI when the underlying population follows a gamma distribution. Several examples are provided to illustrate the general theory.
A Monte Carlo simulation study is carried out that shows the good performance of the unbiased estimator in several scenarios of index choices.
We also present a simulation study that goes beyond the established theory by examining the estimator’s applicability in settings characterized by a generalized gamma distribution.
We evaluate the effectiveness of the NPRIs and their estimates in modeling a real-world dataset related to gross domestic product per capita in the Americas.

	\end{abstract}
	\smallskip
	\noindent
	{\small {\bfseries Keywords.} {Gamma distribution,  normalized pairwise ratio index, Gini coefficient, squared coefficient of variation, unbiased estimator.}}
	\\
	{\small{\bfseries Mathematics Subject Classification (2010).} {MSC 60E05 $\cdot$ MSC 62Exx $\cdot$ MSC 62Fxx.}}
%	

%\clearpage
%{
%	\hypersetup{linkcolor=black}
%	\tableofcontents
%}

\section{Introduction}

Measures of inequality, variability, and concentration play a fundamental role in economics, finance, reliability, actuarial science, and many other applied fields. Among them, the Gini coefficient \citep{Gini1912,Gini1936} is perhaps the most widely used due to its intuitive interpretation and desirable theoretical properties. Over the years, numerous extensions and related measures have been proposed, including generalized Gini indices, entropy-based measures, and order-statistic functionals; see, for example, \cite{Atkinson1970,Cowell2011,Yitzhaki2013}.

Although these indices are often introduced separately, many share common structural features, such as scale invariance and dependence on relative proportions rather than absolute magnitudes. This observation motivates the search for a unified framework capable of encompassing a broad collection of existing measures while facilitating the derivation of their statistical properties.

In this paper, we introduce a class of normalized pairwise ratio indices (NPRIs) generated by homogeneous functions. This framework includes, as special cases, the classical Gini coefficient, generalized Gini-type indices, entropy measures, product-type indices, and several order-statistic-based measures. We establish a general representation of these indices and derive explicit expressions when the underlying population follows a gamma distribution.

The gamma family occupies a prominent position in probability and statistics due to its flexibility and widespread applicability. Moreover, gamma random samples possess a remarkable decomposition into an independent total sum and a Dirichlet vector of proportions \citep{Mosimann1962}, which provides a natural setting for studying normalized scale-invariant indices.

Our main contribution is the analytical construction of a simple unbiased estimator for any NPRI under gamma populations. The estimator is based on a U-statistic and exploits the independence between the sample mean and the normalized proportions. As a result, a unified estimation framework is obtained for the entire NPRI class. The proposed methodology is investigated through simulation studies and an application to economic inequality data from the Americas.

The remainder of the paper is organized as follows. Section \ref{A class of scale-invariant pairwise indices} introduces the NPRI framework. Section \ref{sect-3} presents the proposed estimation methodology and its theoretical properties. Section \ref{Simulation study} contains simulation results. Section \ref{sec:applications} illustrates the methodology with a real-data application. Section \ref{sect-5}  concludes the paper.

\section{A class of normalized scale-invariant indices}\label{A class of scale-invariant pairwise indices}

Let $X$ be a non-negative random variable with finite mean 
$\mu = \mathbb{E}[X] \in (0,\infty)$, and let $X_1,\ldots,X_m$, $m\geqslant 2$, be independent copies of $X$.

Let $g:(0,\infty)^m \to [0,\infty)$ be a measurable function satisfying:

\begin{enumerate}
	\item[(A1)] $g(\lambda x_1,\ldots,\lambda x_m)=\lambda g(x_1,\ldots,x_m)$ for all $\lambda>0$ (homogeneous function of degree one);
	%\item[(A2)] $g(x_1,\ldots,x_m)\geqslant 0$ for all $x_1,\ldots,x_m>0$;
	\item[(A2)] $C$ is any finite constant such that
	$g(x_1,\ldots,x_m)\leqslant C \sum_{i=1}^m x_i$.
\end{enumerate}

\begin{definition}
	The index associated with $g$ is defined by
	\begin{equation}
		\label{eq:index}
		I(g):=I(g;X)
		=
		\frac{\mathbb{E}[g(X_1,\ldots,X_m)]}{mC \mu}.
	\end{equation}
	We refer to $I(g)$ as a normalized pairwise ratio index \text{(NPRI)}, as it depends only on the relative proportion $X_1/(X_1+\cdots+X_m)$ and is normalized to take values in $[0,1]$.
\end{definition}

\begin{remark}
	By Assumption (A2),
	$
	0\leqslant g(X_1,\ldots,X_m)\leqslant C\sum_{i=1}^m X_i.
	$
	Taking expectations yields
	$
	0\leqslant I(g)\leqslant 1.
	$
	
	Moreover, if \(g(x,\ldots,x)=0\) and \(X\) is degenerate, then \(I(g)=0\). Finally, by the homogeneity of \(g\),
	$
	I(g;\lambda X)=I(g;X), \ \lambda>0,
	$
	that is, \(I(g)\) is scale-invariant.
\end{remark}

%The index $I(g)$ is scale-invariant, i.e., $I(g;\lambda X)=I(g;X)$ for all $\lambda>0$, and satisfies $0\leqslant I(g)\leqslant 1$. Moreover, if $X$ is degenerate, then $I(g)=0$ whenever $g(x,\ldots,x)=0$.

%
\begin{table}[H] 
	\centering
	\resizebox{\textwidth}{!}{
		\renewcommand{\arraystretch}{1.5}
		\setlength{\extrarowheight}{2.2pt}
		\begin{tabular}{|c|c|c|c|}
			\hline
			\textbf{Name} & ${g}(x_1,\ldots,x_m)$ & ${C}$ & ${I}(g)$\\
			\hline
			
			Gini index  
			& $|x_1-x_2|$
			& $1$ 
			& $\displaystyle \frac{\mathbb{E}|X_1-X_2|}{2\mu}$ 
			\\
			\hline
			
			The $m$th Gini index 
			& $x_{m:m}-x_{1:m}$
			& $1$ 
			& $\displaystyle \frac{\mathbb{E}[X_{m:m}-X_{1:m}]}{m\mu}$
			\\
			\hline
			
			Extended $m$th Gini index 
			& $x_{k:m}-x_{j:m}$, $1\leqslant j<k\leqslant m$
			& $1$ 
			& $\displaystyle \frac{\mathbb{E}[X_{k:m}-X_{j:m}]}{m\mu}$
			\\
			\hline
			
			Linear order-statistic inequality index 
			& \makecell{$\displaystyle \sum_{k=1}^m a_k x_{k:m}$, 
			\vspace*{0.1cm}$\displaystyle \sum_{k=1}^m a_k = 0$, \\  \vspace*{0.1cm}  $a_1 \leqslant \cdots \leqslant a_m$}
			& $1$ 
			& $\displaystyle
			\frac{\sum_{k=1}^m a_k\mathbb{E}[X_{k:m}]}{m\mu}$ 
			\\ 
			\hline
			
			Smoothed CV index (SCV)
			& $\displaystyle
			\frac{(x_1-x_2)^2}{x_1+x_2}$
			& $1$ 
			& $ \displaystyle
			\frac{\mathbb{E}\big[\frac{(X_1-X_2)^2}{X_1+X_2}\big]}{2\mu}$ 
			\\
			\hline
			
			Power index ($p>0$) 
			& $\displaystyle
			x_1^p(x_1+x_2)^{1-p}+x_2^p(x_1+x_2)^{1-p}$
			& $\max\{1,2^{1-p}\}$ 
			& $\displaystyle
			\frac{\mathbb{E}[g(X_1,X_2)]}{2\mu \max\{1,2^{1-p}\}}$
			\\
			\hline
			
			Shannon entropy index 
			& $\displaystyle -x_1\log\left(\frac{x_1}{x_1+x_2}\right)-x_2\log\left(\frac{x_2}{x_1+x_2}\right)$ 
			& $\log(2)$ 
			& $\displaystyle \frac{\mathbb{E}[g(X_1,X_2)]}{2\mu\log(2)}$ 
			\\
			\hline
			
			Product index 
			& $\displaystyle\frac{x_1x_2}{x_1+x_2}$
			& $\displaystyle\frac{1}{4}$ 
			& $\displaystyle \frac{2\,\mathbb{E}\!\left[\frac{X_1X_2}{X_1+X_2}\right]}{\mu}$ 
			\\
			\hline
			
			Asymmetric index $(\theta,\beta)$ 
			& $\displaystyle \frac{x_1^\theta x_2^\beta}{(x_1+x_2)^{\theta+\beta-1}}$, $\theta,\beta>0$
			& $ \displaystyle \frac{\theta^\theta\beta^\beta}{(\theta+\beta)^{\theta+\beta}}$ 
			& $ \displaystyle \frac{\mathbb{E}[g(X_1,X_2)]}{2C\mu}$ 
			\\
			\hline
			
			Indicator index 
			& $\displaystyle (x_1+x_2){1}_{\{\frac{x_1}{x_1+x_2}\le c\}}$, $c>0$
			& $1$ 
			& $\displaystyle\frac{\mathbb{E}[g(X_1,X_2)]}{2\mu}$ 
			\\
			\hline
			
			Maximum index  
			& {$\max\{x_1,x_2\}$}
			& $1$ 
			& $\displaystyle\frac{\mathbb{E}[\max\{X_1,X_2\}]}{2\mu}$ 
			\\
			\hline
			
			Minimum index 
			& $\displaystyle\min\{x_1,x_2\}$
			& $\displaystyle\tfrac{1}{2}$ 
			& $\displaystyle \frac{\mathbb{E}[\min\{X_1,X_2\}]}{\mu}$ 
			\\
			\hline
			
			Linear (degenerate) 
			& $x_1$ 
			& $1$ 
			& $\displaystyle\tfrac{1}{2}$ 
			\\
			\hline
			
		\end{tabular}
	}
\caption{
	Examples of NPRIs. The examples listed in this table are related to the
	Gini index \citep{Gini1912,Yitzhaki2013}, the $m$th Gini index
	\citep{Gavilan-Ruiz2024,Vila2025}, the extended $m$th Gini index
	\citep{Vila2026a}, the linear order-statistic inequality index
	\citep{Vila2026b}, the power index \citep{Atkinson1970},
	the Shannon entropy index \citep{Shannon1948,CoverThomas2006},
	the product index \citep{Bullen2003}, the asymmetric index
	\citep{MarshallOlkinArnold2011}, the indicator index
	\citep{ShakedShanthikumar2007}, and the maximum and minimum indices
	\citep{DavidNagaraja2003}.
}
	\label{NPRI-index}
\end{table}

Any function $g$ satisfying (A1) and (A2) admits the representation
\begin{equation}
	\label{eq:representation}
	g(x_1,\ldots,x_m) = (x_1+\cdots+x_m)\,\psi\!\left(\frac{x_1}{x_1+\cdots+x_m}\right), \quad x_1,\ldots, x_m>0,
\end{equation}
for some measurable function $\psi:[0,1]\to[0,\infty)$, in which case 
$C=\sup_{t\in[0,1]}\psi(t)$. 
Representation \eqref{eq:representation} shows that every NPRI depends only on the relative shares
$
X_i/(X_1+\cdots+X_m)
$
rather than on the absolute scale of the observations. This explains the scale invariance of the class.
%Letting
%\[
%T = \frac{X_1}{X_1+X_2},
%\]
%it follows that
%\begin{equation}
%	\label{eq:index_ratio}
%	I(g)
%	=
%	\frac{\mathbb{E}[\psi(T)]}{\sup_{t\in[0,1]}\psi(t)}.
%\end{equation}

This class includes several well-known indices as special cases, such as the Gini index ($g(x_1,x_2)=|x_1-x_2|$), the Smoothed CV index (SCV)  ($g(x_1,x_2)={(x_1-x_2)^2}/({x_1+x_2})$),  extended Gini-type indices, entropy-based indices, and product-type measures (see Table \ref{NPRI-index}).

In the special case where $X \sim \mathrm{Gamma}(\alpha,\lambda)$, one has
\[
\frac{X_1}{X_1+\cdots+X_m} \sim \mathrm{Beta}(\alpha, (m-1)\alpha),
\]
so that the NPRI depends only on the distribution of this ratio.

	Table \ref{NPRI-index} illustrates the flexibility of the NPRI framework, which encompasses classical inequality measures, variability indices, entropy-based quantities, and order-statistic functionals within a unified formulation.

{

\section{Unbiased estimator based on a $U$-statistic of order $m$}\label{sect-3}

Let $(X_1,\ldots,X_n)^\top$ be a random sample from a population with distribution
$X\sim\mathrm{Gamma}(\alpha,\lambda)$.

Table \ref{NPRI-index-1} presents particular cases of the NPRIs listed in Table
\ref{NPRI-index} when the population follows a gamma distribution. For gamma
populations, all indices in Table \ref{NPRI-index-1} depend only on the shape
parameter $\alpha$. This reflects the scale invariance of NPRIs, since the rate
parameter $\lambda$ affects only the scale of the distribution.
\begin{table}[htb!] 
	\centering
	\resizebox{\textwidth}{!}{
		\renewcommand{\arraystretch}{1.5}
		\setlength{\extrarowheight}{2.2pt}
		\begin{tabular}{|c|c|}
			\hline
			\textbf{Name} & ${I}(g)$,  $X\sim\text{Gamma}(\alpha,\lambda)$\\
			\hline
			
			Gini index  
			& 	$ \displaystyle
			\frac{\Gamma\left(\alpha+\tfrac12\right)}
			{\sqrt{\pi}\alpha\Gamma(\alpha)}$
			\\
			\hline
			
			The $m$th Gini index 
			& $ \displaystyle		
			{1\over m\alpha\Gamma^m(\alpha)} 
			\left[
			{\displaystyle 
				\int_0^\infty 
				\{
				\Gamma^m(\alpha)-\gamma^m(\alpha,s)
				\}
				{\rm d}s
				-
				\int_0^\infty 
				\Gamma^m(\alpha,s) {\rm d}s
			}
			\right]$
			\\
			\hline
			
%			Extended $m$th Gini index
%			&
%			$ \displaystyle
%			{1\over\alpha m}
%			\displaystyle 
%			\sum_{r=k}^{m}
%			(-1)^{r-k}
%			\binom{r-1}{k-1}
%			\binom{m}{r}
%			\int_0^\infty 
%			\left\{1-{\gamma^r(\alpha, t)\over\Gamma^r(\alpha) }\right\}
%			{\rm d}t
%			-	
%			{1\over\alpha m}
%			\sum_{s=j}^{m}
%			(-1)^{s-j}
%			\binom{s-1}{j-1}
%			\binom{m}{s}
%			\int_0^\infty 
%			\left\{1-{\gamma^s(\alpha, t)\over\Gamma^s(\alpha) }\right\}
%			{\rm d}t$
%
Extended $m$th Gini index
&
$\displaystyle
\frac{1}{\alpha m}
\left[
\begin{aligned}
	&
	\sum_{r=k}^{m}
	(-1)^{r-k}
	\binom{r-1}{k-1}
	\binom{m}{r}
	\int_0^\infty
	\left\{
	1-\frac{\gamma^r(\alpha,t)}
	{\Gamma^r(\alpha)}
	\right\}
	{\rm d}t
	\\
	&-
	\sum_{s=j}^{m}
	(-1)^{s-j}
	\binom{s-1}{j-1}
	\binom{m}{s}
	\int_0^\infty
	\left\{
	1-\frac{\gamma^s(\alpha,t)}
	{\Gamma^s(\alpha)}
	\right\}
	{\rm d}t
\end{aligned}
\right]
$
			\\
			\hline
			
			Linear order-statistic inequality index 
			&
			$ \displaystyle
				{1\over \alpha m} 
			\displaystyle
			\sum_{k=1}^m a_k 
			\sum_{r=k}^{m}
			\binom{m}{r}
			(-1)^{r-k}\binom{r-1}{k-1}
			{
			\displaystyle
			\int_0^\infty 
\left\{1-{\gamma^r(\alpha, t)\over\Gamma^r(\alpha) }\right\}
{\rm d}t
			}
			$
			\\ 
			\hline
			
			Smoothed CV index (SCV)
			& $\displaystyle \frac{1}{2\alpha+1}$ 
			\\
			\hline
			
			Power index ($p>0$) 
			& $\displaystyle
			\frac{2}{\max\{1,2^{1-p}\}}
			\,
			\frac{{\rm B}(\alpha+p,\alpha)}{{\rm B}(\alpha,\alpha)}$
			\\
			\hline
			
			Shannon entropy index 
			& $\displaystyle
			\frac{\psi(2\alpha+1)-\psi(\alpha+1)}{\log(2)}$ 
			\\
			\hline
			
			Product index 
			& $\displaystyle \frac{2\alpha}{2\alpha+1}$
			\\
			\hline
			
			Asymmetric index $(\theta,\beta)$  
			& $ \displaystyle
			\frac{(\theta+\beta)^{\theta+\beta}}{\theta^\theta\beta^\beta}
			\,
			\frac{\Gamma(\alpha+\theta)\Gamma(\alpha+\beta)\Gamma(2\alpha)}
			{\Gamma(2\alpha+\theta+\beta)\Gamma^2(\alpha)}$ 
			\\
			\hline
			
			Indicator index 
			& $\displaystyle I_c(\alpha,\alpha)$
			\\
			\hline
			
			Maximum index  
			& 
			$\displaystyle
			{1\over 2}
			+
			\frac{\Gamma\left(\alpha+\tfrac12\right)}
			{2\sqrt{\pi}\alpha\Gamma(\alpha)}$ 
			\\
			\hline
			
			Minimum index 
			& 
			$ \displaystyle
			1
			-
			\frac{\Gamma\!\left(\alpha+\tfrac{1}{2}\right)}
			{\sqrt{\pi}\,\alpha\,\Gamma(\alpha)}$
			\\
			\hline
			
			Linear (degenerate) 
			& 
			$ \displaystyle
			\frac{1}{2}$ 
			\\
			\hline
			
		\end{tabular}
	}
	\caption{Examples of NPRIs for the Gamma case. 
	In the above, $\Gamma(x)$, $\gamma(x,y)$,  $\Gamma(x,y)$,  $I_c(a,b)={\rm B}_c(a,b)/{\rm B}(a,b)$, ${\rm B}(a,b)$ and $\psi(x)$ denote the (complete) gamma function, the lower incomplete gamma function, the upper incomplete gamma function, the regularized incomplete beta function, the incomplete beta function and the digamma function, respectively.}
	\label{NPRI-index-1}
\end{table}

The next theorem establishes an unbiased estimator of the NPRI $I(g)$, defined in \eqref{eq:index},
based on a $U$-statistic of order $m$ \citep{Hoeffding1948}, where $\mu=\mathbb E[X]$ and
$g:(0,\infty)^m \to [0,\infty)$ satisfies Property A1.

\begin{theorem}\label{thm:unbiased-estimator}
Let $(X_1,\ldots,X_n)^\top$ be a random sample from
$\mathrm{Gamma}(\alpha,\lambda)$ and let
$g:\mathbb R_+^m\to\mathbb R$ be a homogeneous function of degree one.
Then, for $n\geqslant m$, the statistic
\begin{equation}\label{eq:estimator}
\widehat I_{g,m}
=
\frac{1}{\binom{n}{m}C}
\,
\frac{\displaystyle
\sum_{1\leqslant i_1<\cdots<i_m\leqslant n}
g(X_{i_1},\ldots,X_{i_m})
}
{m\overline X}
\end{equation}
is an unbiased estimator of the NPRI $I(g)$, that is,
\[
\mathbb E[\widehat I_{g,m}]
=
I(g)
=
\frac{\mathbb E[g(X_1,\ldots,X_m)]}{mC\mu}.
\]
\end{theorem}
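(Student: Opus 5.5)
The plan is to use the Lukacs--Mosimann decomposition of a gamma sample. Write $S=X_1+\cdots+X_n=n\overline X$ and $D=(D_1,\ldots,D_n)=(X_1/S,\ldots,X_n/S)$. Since the $X_i$ are i.i.d.\ $\mathrm{Gamma}(\alpha,\lambda)$, $S\sim\mathrm{Gamma}(n\alpha,\lambda)$ and $D\sim\mathrm{Dirichlet}(\alpha,\ldots,\alpha)$, and $S$ is independent of $D$ \citep{Mosimann1962}. By homogeneity (A1), each summand satisfies $g(X_{i_1},\ldots,X_{i_m})=S\,g(D_{i_1},\ldots,D_{i_m})$. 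Since $m\overline X=mS/n$, we get
\[
\widehat I_{g,m}=\frac{n}{\binom{n}{m}C\,m}\sum_{1\leqslant i_1<\cdots<i_m\leqslant n} g(D_{i_1},\ldots,D_{i_m}),
\]
which is a function of $D$ alone. This gives
\[
\mathbb E[\widehat I_{g,m}]=\frac{n}{\binom nm Cm}\sum_{i_1<\cdots<i_m}\mathbb E[g(D_{i_1},\ldots,D_{i_m})].
\]
Nonnegativity of $g$, or integrability via (A2), justifies taking the expectation of each term.

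Next I compute $\mathbb E[g(D_{i_1},\ldots,D_{i_m})]$ using the same independence in reverse. We have $\mathbb E[g(X_{i_1},\ldots,X_{i_m})]=\mathbb E[S\,g(D_{i_1},\ldots,D_{i_m})]=\mathbb E[S]\,\mathbb E[g(D_{i_1},\ldots,D_{i_m})]$, with $\mathbb E[S]=n\mu$. The left-hand side equals $\mathbb E[g(X_1,\ldots,X_m)]$ because the sample is i.i.d.\ (exchangeable). Hence $\mathbb E[g(D_{i_1},\ldots,D_{i_m})]=\mathbb E[g(X_1,\ldots,X_m)]/(n\mu)$ for every index tuple.

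Substituting this into the sum, which has $\binom nm$ identical terms, gives
\[
\mathbb E[\widehat I_{g,m}]=\frac{n}{Cm}\cdot\frac{\mathbb E[g(X_1,\ldots,X_m)]}{n\mu}=I(g).
\]

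The main point needing care is justifying the factorization $\mathbb E[S\,h(D)]=\mathbb E[S]\,\mathbb E[h(D)]$ with $h\geqslant 0$ measurable. This follows from independence via Tonelli, with no integrability assumption beyond finiteness of $\mathbb E[g(X_1,\ldots,X_m)]$, which (A2) guarantees since $g\leqslant C\sum x_i$. If one wants $g$ real-valued as in the statement, I would split $g=g^+-g^-$, both homogeneous of degree one, and assume $\mathbb E|g(X_1,\ldots,X_m)|<\infty$. I would also note that $\overline X>0$ almost surely, so the estimator is well defined. Everything else is routine bookkeeping.
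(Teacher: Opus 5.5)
Your proposal is correct and follows essentially the same route as the paper: Mosimann's independence of $S$ and the Dirichlet proportions, homogeneity to write $g(X_{i_1},\ldots,X_{i_m})=S\,g(D_{i_1},\ldots,D_{i_m})$, the factorization $\mathbb E[S\,g(D_{i_1},\ldots,D_{i_m})]=\mathbb E[S]\,\mathbb E[g(D_{i_1},\ldots,D_{i_m})]$ with $\mathbb E[S]=n\mu$, and exchangeability. The paper runs the chain of equalities starting from $I(g)$ rather than from the estimator. Your remarks on Tonelli and on the split $g=g^+-g^-$ for real-valued $g$ make explicit an integrability point that the paper's proof leaves implicit.
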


\begin{proof}
Set
\[
S=\sum_{i=1}^{n}X_i,
\quad
T_i=\frac{X_i}{S},
\quad i=1,\ldots,n.
\]

Since $X_1,\ldots,X_n$ are independent gamma random variables with common
shape parameter $\alpha$, it follows from \citet{Mosimann1962} that
$(T_1,\ldots,T_n)^\top$ and $S$ are independent and
\[
(T_1,\ldots,T_n)^\top
\sim
\mathrm{Dirichlet}(\alpha,\ldots,\alpha).
\]

By exchangeability,
\[
\frac{\mathbb E[g(X_1,\ldots,X_m)]}{mC\mu}
=
\frac{1}{\binom{n}{m}}
\sum_{1\leqslant i_1<\cdots<i_m\leqslant n}
\frac{\mathbb E[g(X_{i_1},\ldots,X_{i_m})]}
     {mC\mu}.
\]

Using the representation $X_i=ST_i$, the homogeneity of $g$, and the
independence of $S$ and $(T_1,\ldots,T_n)^\top$, we obtain
\begin{align*}
\frac{\mathbb E[g(X_1,\ldots,X_m)]}{mC\mu}
&=
\frac{1}{\binom{n}{m}}
\sum_{1\leqslant i_1<\cdots<i_m\leqslant n}
\frac{\mathbb E[g(ST_{i_1},\ldots,ST_{i_m})]}
     {mC\mu}
\\
&=
\frac{1}{\binom{n}{m}}
\sum_{1\leqslant i_1<\cdots<i_m\leqslant n}
\frac{\mathbb E[S]\,
      \mathbb E[g(T_{i_1},\ldots,T_{i_m})]}
     {mC\mu}.
\end{align*}

Since $\mathbb E[S]=n\mu$,
\[
\frac{\mathbb E[g(X_1,\ldots,X_m)]}{mC\mu}
=
\frac{n}{m\binom{n}{m}C}
\sum_{1\leqslant i_1<\cdots<i_m\leqslant n}
\mathbb E[g(T_{i_1},\ldots,T_{i_m})].
\]

Again by homogeneity,
\[
g(T_{i_1},\ldots,T_{i_m})
=
g\!\left(\frac{X_{i_1}}{S},\ldots,\frac{X_{i_m}}{S}\right)
=
\frac{g(X_{i_1},\ldots,X_{i_m})}{S}.
\]

Hence,
\begin{align*}
\frac{\mathbb E[g(X_1,\ldots,X_m)]}{mC\mu}
&=
\frac{n}{m\binom{n}{m}C}
\sum_{1\leqslant i_1<\cdots<i_m\leqslant n}
\mathbb E\!\left[
\frac{g(X_{i_1},\ldots,X_{i_m})}{S}
\right]
\\[0,2cm]
&=
\mathbb E\!\left[
\frac{1}{\binom{n}{m}C}
\,
\frac{\displaystyle
\sum_{1\leqslant i_1<\cdots<i_m\leqslant n}
g(X_{i_1},\ldots,X_{i_m})
}
{m\overline X}
\right],
\end{align*}
because $S=n\overline X$.

The expression inside the expectation is precisely the estimator 
$\widehat I_{g,m}$
 given in \eqref{eq:estimator}. Hence, $\mathbb E[\widehat I_{g,m}]=I(g)$,
which establishes the desired unbiasedness.
\end{proof}

}

\begin{remark}
	The unbiased estimator  $\widehat{I}_{g,m}$ %\(h(X_1,\ldots,X_n)\) 
    derived in Theorem \ref{thm:unbiased-estimator} applies simultaneously to all NPRIs introduced in Section \ref{A class of scale-invariant pairwise indices}. The specific form of the estimator depends only on the choice of the generating function \(g\).
	%
	%Moreover, the estimator $\widehat{I}_{g,m}$ %\%(h(X_1,\ldots,X_n)\) 
    %is a symmetric U-statistic of order \(m\). Consequently, its asymptotic properties may be studied using the classical theory of U-statistics.
\end{remark}

%\subsection{Asymptotic properties of $\widehat I_{g,m}$}
In the remainder of this section, we establish the large-sample properties of the unbiased estimator
$\widehat I_{g,m}$ defined in \eqref{eq:estimator}. Since
\[
\widehat I_{g,m}
=
\frac{U_n}{Cm\overline X},
\quad 
U_n=\frac{1}{\binom{n}{m}}
\,
{\displaystyle
\sum_{1\leqslant i_1<\cdots<i_m\leqslant n}
g(X_{i_1},\ldots,X_{i_m})
},
\]
where $U_n$ is a $U$-statistic of order $m$, its asymptotic behavior follows
from the classical theory of $U$-statistics combined with the law of large
numbers and the delta method.
\begin{proposition}[Strong consistency]
Suppose that
$
\mathbb E\!\left[ |g(X_1,\ldots,X_m)| \right]<\infty.
$
Then
$
\widehat I_{g,m}
\xrightarrow{\rm a.s.}
I(g)
$
as
$
\ n\to\infty,
$
where $\xrightarrow{\rm a.s.}$ denotes almost sure convergence.
\end{proposition}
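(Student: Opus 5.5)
We use the factorization $\widehat I_{g,m}=U_n/(Cm\overline X)$ recorded just before the statement, where $U_n$ is the $U$-statistic of order $m$ with kernel $g$. The plan is to show that numerator and denominator each converge almost surely, and then pass to the ratio by continuity.

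\textbf{Step 1: the numerator.} The kernel $g(x_1,\ldots,x_m)$ need not be symmetric. However, $U_n$ is a sum over increasing index tuples, and since the sample is i.i.d., $U_n$ equals the $U$-statistic built from the symmetrized kernel
\[
g^{\ast}(x_1,\ldots,x_m)=\frac{1}{m!}\sum_{\pi}g(x_{\pi(1)},\ldots,x_{\pi(m)}),
\]
where the sum runs over all permutations $\pi$ of $\{1,\ldots,m\}$. To see this, average over the orderings within each $m$-subset, using exchangeability of the joint law. A point to check is that this is an identity of statistics, not merely an equality of expectations. That holds only if $g$ is symmetric, or if the ordered-tuple definition is interpreted as the symmetrized one. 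For the pathwise SLLN I will invoke Hoeffding's strong law for $U$-statistics (\citealp{Hoeffding1948}; Berk's reverse-martingale argument). This result applies to any kernel, symmetric or not, with $\mathbb E|g(X_1,\ldots,X_m)|<\infty$. For a non-symmetric kernel I will apply it to the statistic $\frac{1}{\binom nm}\sum_{i_1<\cdots<i_m} g(X_{i_1},\ldots,X_{i_m})$ directly. The reverse-martingale proof works with the exchangeable $\sigma$-field and yields $\mathbb E[g^{\ast}]=\mathbb E[g(X_1,\ldots,X_m)]$ as the limit, provided the statistic is symmetric in the sample. Where it is not, I will fall back on the decomposition into finitely many $U$-statistics indexed by the relative orderings. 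Under the hypothesis $\mathbb E|g(X_1,\ldots,X_m)|<\infty$, this step gives
\[
U_n\xrightarrow{\rm a.s.}\mathbb E[g(X_1,\ldots,X_m)].
\]

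\textbf{Step 2: the denominator.} By Kolmogorov's strong law of large numbers, $\overline X\to\mu=\alpha/\lambda\in(0,\infty)$ almost surely. Hence $Cm\overline X\to Cm\mu>0$ almost surely. Here $C$ is a fixed finite positive constant from (A2); if $C=0$ then $g\equiv0$ and the statement is trivial.

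\textbf{Step 3: the ratio.} The map $(u,v)\mapsto u/(Cmv)$ is continuous on $\mathbb R\times(0,\infty)$. Let $\Omega_0$ be the intersection of the two almost-sure events from Steps 1 and 2; it has probability one. On $\Omega_0$ we have $\overline X>0$ eventually, which in fact holds for all $n$ since gamma observations are positive. Therefore
\[
\widehat I_{g,m}\to \frac{\mathbb E[g(X_1,\ldots,X_m)]}{mC\mu}=I(g)
\]
on $\Omega_0$, which is the claimed almost-sure convergence.

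\textbf{Main obstacle.} The only nontrivial ingredient is the almost-sure convergence of $U_n$ under a mere first-moment condition. This is exactly Hoeffding's (1961) strong law for $U$-statistics, and I will cite it rather than reprove it. The one point requiring care is the reduction to a symmetric kernel noted in Step 1. Everything else follows from the classical SLLN and continuous mapping.
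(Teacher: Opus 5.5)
Your overall route (strong law for $U_n$, Kolmogorov's law for $\overline X$, continuous mapping for $(u,v)\mapsto u/(Cmv)$) is exactly the paper's proof, which simply cites the strong law of large numbers for $U$-statistics. For symmetric $g$ your argument is complete. The gap is in what Step~1 says about non-symmetric $g$. That case really occurs here: $U_n$ is a sum over increasing index tuples, and the asymmetric index with $\theta\neq\beta$, the indicator index and the linear index $g(x_1,x_2)=x_1$ of Table~\ref{NPRI-index} all have non-symmetric kernels. Three of your assertions fail for such kernels.

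First, $U_n$ is not pathwise equal to the $U$-statistic $U_n^{\ast}$ with kernel $g^{\ast}$; the two only share their expectation. For $g(x_1,x_2)=x_1$ one has $U_n=\binom{n}{2}^{-1}\sum_{i=1}^n(n-i)X_i$, while $U_n^{\ast}=\overline X$.

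Second, Hoeffding's (1961) strong law and Berk's reverse-martingale proof do not cover arbitrary kernels. Berk's argument rests on $\mathbb E[g(X_1,\ldots,X_m)\mid\mathcal F_n]=U_n^{\ast}$, where $\mathcal F_n$ is generated by the symmetric functions of $X_1,\ldots,X_n$ and by $X_{n+1},X_{n+2},\ldots$. It therefore yields $U_n^{\ast}\to\mathbb E[g(X_1,\ldots,X_m)]$ a.s.\ and says nothing about $U_n$, which is not $\mathcal F_n$-measurable.

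Third, your fallback ``decomposition into finitely many $U$-statistics indexed by the relative orderings'' cannot exist. Any finite combination of symmetric $U$-statistics is a symmetric function of $(X_1,\ldots,X_n)$, whereas the $U_n$ above is not. Splitting the sum over distinct tuples by the relative order of the indices only produces $m!$ ordered-index statistics with kernels $g\circ\pi$, each as non-symmetric as the original, so that route is circular. The paper's one-line citation tacitly relies on the symmetric-kernel law too, so it does not close this hole either.

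A correct way to handle general $g$ is truncation. Write $g=g_M+\bar g_M$ with $g_M=g\,\mathbf 1\{|g|\leqslant M\}$.

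\emph{Tail part.} The increasing-tuple average of $\bar g_M$ is bounded in modulus by the symmetric $U$-statistic with kernel $\sum_{\pi}|\bar g_M(x_{\pi(1)},\ldots,x_{\pi(m)})|$. By Hoeffding's law this converges a.s.\ to $m!\,\mathbb E\bigl[|g(X_1,\ldots,X_m)|\,\mathbf 1\{|g|>M\}\bigr]$.

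\emph{Bounded part.} Every increasing tuple has the same mean $\theta_M=\mathbb E[g_M(X_1,\ldots,X_m)]$. The term $\mathbb E\prod_{k=1}^{4}\{g_M(X_{S_k})-\theta_M\}$ vanishes unless each $S_k$ meets another $S_l$, which leaves only $O(n^{4m-2})$ nonzero terms out of $\binom{n}{m}^4$. Hence $\mathbb E\bigl(U_n^{(M)}-\theta_M\bigr)^4=O(n^{-2})$, and Markov's inequality with Borel--Cantelli gives $U_n^{(M)}\to\theta_M$ a.s.

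Combining the two parts, $\limsup_n|U_n-\mathbb E[g]|\leqslant(m!+1)\,\mathbb E\bigl[|g|\,\mathbf 1\{|g|>M\}\bigr]$ a.s.\ for each integer $M$. Letting $M\to\infty$ gives $U_n\to\mathbb E[g(X_1,\ldots,X_m)]$ a.s.\ for every integrable $g$, after which your Steps~2 and~3 go through unchanged. Alternatively, restrict the statement to symmetric $g$, which is all the cited strong law delivers.
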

\begin{proof}
By the strong law of large numbers for $U$-statistics 
\citep{ Lee1990,Henze2024},
$
U_n
\xrightarrow{\rm a.s.}
\mathbb E[g(X_1,\ldots,X_m)].
$
Moreover,
$
\overline X \xrightarrow{\rm a.s.} \mu.
$
Hence, the proof follows from the continuous mapping Theorem.
\end{proof}

\begin{proposition}[Asymptotic normality]
Assume that
$
\mathbb E[g(X_1,\ldots,X_m)^2]<\infty.
$
Define
$
\vartheta(x)
=
\mathbb E[g(x,X_2,\ldots,X_m)]-\mathbb E[g(X_1,\ldots,X_m)].
$
Then
\[
\sqrt{n}\{\widehat I_{g,m}-I(g)\}
\stackrel{\mathscr{D}}{\longrightarrow}
N(0,\nabla h^\top {\bf \Sigma} \nabla h),~~\text{as}~n\rightarrow\infty,
\]
where $\stackrel{\mathscr{D}}{\longrightarrow}$ denotes convergence in distribution, 
$
h(u,v)={u}/({Cmv}),
$
and
\[
\nabla h(\mathbb E[g(X_1,\ldots,X_m)],\mu)
=
\left(
\frac{1}{Cm\mu},
-\frac{\mathbb E[g(X_1,\ldots,X_m)]}{Cm\mu^2}
\right)^\top,
\ \ 
{\bf \Sigma}
=
\begin{pmatrix}
m^2\operatorname{Var}(\vartheta(X))
&
m\operatorname{Cov}(\vartheta(X),X)
\\[0.2cm]
m\operatorname{Cov}(\vartheta(X),X)
&
\operatorname{Var}(X)
\end{pmatrix}.
\]
%and
%\[
%{\bf \Sigma}
%=
%\begin{pmatrix}
%m^2\operatorname{Var}(\vartheta(X))
%&
%m\operatorname{Cov}(\vartheta(X),X)
%\\[0.2cm]
%m\operatorname{Cov}(\vartheta(X),X)
%&
%\operatorname{Var}(X)
%\end{pmatrix}.
%\]
\end{proposition}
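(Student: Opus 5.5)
The plan is to treat $\widehat I_{g,m}$ as a smooth function of the pair $(U_n,\overline X)$ and apply the multivariate delta method. Writing $\theta_g=\mathbb E[g(X_1,\ldots,X_m)]$, we have $\widehat I_{g,m}=h(U_n,\overline X)$ and $I(g)=h(\theta_g,\mu)$ with $h(u,v)=u/(Cmv)$. The map $h$ is continuously differentiable near $(\theta_g,\mu)$ because $\mu>0$. Its gradient at that point is the vector $\nabla h$ given in the statement. So it suffices to prove the joint limit
\[
\sqrt n\bigl((U_n,\overline X)^\top-(\theta_g,\mu)^\top\bigr)\stackrel{\mathscr D}{\longrightarrow}N(\mathbf 0,{\bf\Sigma}).
\]
The conclusion then follows from the delta method.

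To get the joint limit, I would use Hoeffding's decomposition. Take $g$ symmetric, or else replace it by its symmetrization; this leaves $U_n$ unchanged since the sum runs over unordered index sets, and it also leaves $\theta_g$ unchanged. Under $\mathbb E[g^2]<\infty$, the H\'ajek projection gives
\[
U_n-\theta_g=\frac mn\sum_{i=1}^n\vartheta(X_i)+R_n,\qquad \sqrt n\,R_n\xrightarrow{\mathbb P}0.
\]
The standard proof of this bounds $\mathbb E[R_n^2]=O(n^{-2})$ using the variance formula for $U$-statistics.

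With this, $\sqrt n\bigl(U_n-\theta_g,\ \overline X-\mu\bigr)$ equals $n^{-1/2}\sum_{i=1}^n\bigl(m\vartheta(X_i),\,X_i-\mu\bigr)$ plus a term tending to zero in probability. The summands are i.i.d. and centered, since $\mathbb E[\vartheta(X)]=0$. They have finite second moments: $\mathbb E[\vartheta(X)^2]\le\mathbb E[g^2]<\infty$ by Jensen, and the gamma law has all moments. The bivariate CLT applied to these summands gives exactly the covariance matrix ${\bf\Sigma}$ in the statement. Slutsky's lemma then yields the joint limit for $(U_n,\overline X)$.

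The main obstacle is controlling the remainder $R_n$ jointly with $\overline X$. This requires the orthogonality of the Hoeffding components, which I will take from the classical theory cited earlier \citep{Hoeffding1948,Lee1990}. Everything after that is routine bookkeeping. I will also note that the strong consistency proposition is not needed here beyond the fact that $\overline X\to\mu>0$, which keeps $h$ well defined with probability tending to one.
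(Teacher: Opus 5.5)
Your route is the paper's: write $\widehat I_{g,m}=h(U_n,\overline X)$, obtain a joint CLT for $(U_n,\overline X)$ with covariance ${\bf \Sigma}$, and apply the delta method. The paper quotes Hoeffding's Theorem 7.3 for the vector $(U_n,\overline X)$. You instead derive the joint CLT from the H\'ajek projection, the bivariate CLT and Slutsky. For a \emph{symmetric} $g$ this is complete and correct. Also, once $\sqrt n R_n\to 0$ in probability, Slutsky applies to the vector directly, so no separate ``joint control'' of $R_n$ with $\overline X$ is needed.

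The step that fails is the reduction to symmetric $g$. The paper's $U_n$ evaluates $g$ at $(X_{i_1},\ldots,X_{i_m})$ with $i_1<\cdots<i_m$, i.e.\ in a fixed order. So replacing $g$ by its symmetrization changes $U_n$ unless $g$ is already symmetric. For the kernel $g(x_1,x_2)=x_1$ of Table~\ref{NPRI-index} ($m=2$, $C=1$) one gets $U_n=\binom{n}{2}^{-1}\sum_{i=1}^n (n-i)X_i$, while the symmetrized kernel gives $\overline X$.

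In fact no argument can rescue the statement for such $g$. Put $\sigma^2=\operatorname{Var}(X)$. A direct computation gives $\operatorname{Var}\{\sqrt n(U_n-\overline X)\}=\sigma^2(n+1)/\{3(n-1)\}$. Hence $\sqrt n(\widehat I_{g,2}-\tfrac12)=\sqrt n(U_n-\overline X)/(2\overline X)$ has limiting variance $\sigma^2/(12\mu^2)$. By contrast, $\nabla h^\top{\bf \Sigma}\nabla h$ with $\vartheta(x)=x-\mu$ equals $\sigma^2/(4\mu^2)$, and the symmetrized estimator is identically $\tfrac12$.

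In general, let $\vartheta_j$ be the centered projection onto the $j$th argument. The H\'ajek projection of the ordered sum puts index-dependent weights $\binom{n}{m}^{-1}\binom{i-1}{j-1}\binom{n-i}{m-j}$ on $\vartheta_j(X_i)$, so it is not $\frac mn\sum_i\vartheta(X_i)$.

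The correct repair is not symmetrization but an added hypothesis that $g$ is symmetric. The paper's appeal to Hoeffding tacitly requires the same thing, since Hoeffding's $U$-statistics average over all ordered tuples. Consequently the proposition, with the first-coordinate $\vartheta$ as stated, does not cover the paper's own non-symmetric examples: the asymmetric index with $\theta\neq\beta$, the indicator index, and the linear kernel.
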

\begin{proof}
%The classical central limit theorem for $U$-statistics yields
%\[
%\sqrt n\,\{U_n-\mathbb E[g(X_1,\ldots,X_m)]\}
%\stackrel{\mathscr{D}}{\longrightarrow}
%N\!\left(0,m^2\operatorname{Var}(\vartheta(X))\right).
%\]
%
Applying the classical central limit theorem of \citet[Theorem 7.3]{Hoeffding1948}, we obtain
\[
\sqrt n
\begin{pmatrix}
U_n-\mathbb E[g(X_1,\ldots,X_m)]
\\[0,2cm]
\overline X-\mu
\end{pmatrix}
\stackrel{\mathscr{D}}{\longrightarrow}
N_2({\bf 0},{\bf \Sigma}).
\]

%Finally, 
Since 
$
\widehat I_{g,m}
=
h(U_n,\overline X)
$, the stated result follows from the multivariate delta method. This completes the proof.
\end{proof}

\section{Simulation study}\label{Simulation study}

In this section, Monte Carlo simulations are presented to assess the estimator $\widehat{I}_{g,m}$ of the NPRI $I(g)$.
We analyze replications of the random sample $X_1^{(i)}, \ldots, X_n^{(i)}$ $(i=1,\ldots,M)$ generated by a $\operatorname{Gamma}(\alpha, \lambda)$ distribution.
We assess the performance of the estimator using the Bias 
$$ \operatorname{Bias} = \frac{1}{M} \sum_{i=1}^M \left[\widehat{I}_{g,m}^{(i)} - I(g,\alpha) \right] $$ 
and the root mean square error (RMSE)
 $$RMSE = \sqrt{\frac{1}{M} \sum_{i=1}^M \left[\widehat{I}_{g,m}^{(i)} - I(g,\alpha) \right]^2}$$
under different parameter scenarios. Here
 $\widehat{I}_{g,m}^{(i)}=	h(X_1^{(i)},\ldots,X_n^{(i)})$ is given in \eqref{eq:estimator}.
We also examine the sensitivity of the estimator to different values of $m$, and assess its stabilization in more general populations.

\subsection{Evaluating the Bias and RMSE of estimates}

In this subsection, we present a Monte Carlo simulation study to evaluate the estimator $\widehat{I}_{g,m}$ \eqref{eq:estimator} for $I(g)$ \eqref{eq:index}. 
We conducted $M=5,000$ Monte Carlo replications, generating samples $X_1, X_2,\ldots, X_n$ from the $\operatorname{Gamma}(\alpha, \lambda)$ considering for the simulations the true population's parameter $\lambda=1$ and $\alpha \in\{1,3,5\}$, and the simple sizes $n\in\{10, 20, 50\}$.
 Moreover, we initially chose the parameter $m=2$. A sensitivity analysis of the results with respect to the choice of $m$ is presented in Subsection \ref{sec:sim_chosing_m} and in real data applications (Section \ref{sec:applications}).

 Table \ref{tab:mc_alpha} presents the Bias and RMSE
 of the estimates for the following NPRIs indices: Gini, Power ($p=0.5$), Asymmetric ($\theta=\beta=1$), Maximum, Minimum, SCV, and Product. Observe that as the simple size $n$ increases,  both the Bias and the RMSE of the Monte Carlo simulations generally decrease. 

% \begin{figure}[H]
%     \centering
% \includegraphics[width=1.0\linewidth]{figs/sim_par1.pdf}
% \includegraphics[width=1.0\linewidth]{figs/sim_par2.pdf}
%     \caption{MAE (left) and RMSE (right) for the FMLE of $\bm\theta = (\theta_1, \theta_2)^\top$ based on Monte Carlo simulations in the GLE \eqref{eq:gle_ex} driven by Lévy processes: Wiener + Compound Poisson (W+CPP), Variance Gamma (VG), and Stable. Results for $\theta_1$ are shown on top, and $\theta_2$ on bottom. }
%     \label{fig:sim1}
% \end{figure}

\begin{table}[H]
  \centering
  \caption{Bias and RMSE for the estimates of NPRI index based on Monte Carlo simulations from $X \sim \mathrm{Gamma}(\alpha,\lambda=1)$, with $m = 2$, $n\in\{10,20,50\}$, $\alpha\in\{1,3,5\}$ and $M = 5,000$ replications.}
  \label{tab:mc_alpha}
\begin{tabular}{lllcccccc}
\hline
&& &
\multicolumn{2}{c}{$n=10$} &
\multicolumn{2}{c}{$n=20$} &
\multicolumn{2}{c}{$n=50$} \\
\cline{4-5}\cline{6-7}\cline{8-9}
$\alpha$ & Index & $I(g)$ &
Bias & RMSE &
Bias & RMSE &
Bias & RMSE \\
\hline
1 &   Gini                          & $0.50000$ & ${-0.00134}$ & $0.09490$ & ${-0.00043}$ & $0.06645$ & ${0.00118}$ & $0.04083$ \\
1 &    Power               & $0.94281$ & ${0.00001}$ & $0.02158$ & ${-0.00002}$ & $0.01507$ & ${-0.00027}$ & $0.00933$ \\
1 &    Asymmetric & $0.66667$ & ${0.00339}$ & $0.10406$ & ${-0.00033}$ & $0.07074$ & ${-0.00040}$ & $0.04543$ \\
1 &    Maximum                       & $0.75000$ & ${-0.00127}$ & $0.04877$ & ${0.00046}$ & $0.03255$ & ${0.00007}$ & $0.02056$ \\
1 &    Minimum                       & $0.50000$ & ${-0.00123}$ & $0.09699$ & ${0.00038}$ & $0.06722$ & ${-0.00140}$ & $0.04083$ \\
 1 &   SCV                           & $0.33333$ & ${-0.00017}$ & $0.10228$ & ${-0.00060}$ & $0.07182$ & ${0.00050}$ & $0.04535$ \\
  1 &  Product                       & $0.66667$ & ${0.00112}$ & $0.10302$ & ${0.00036}$ & $0.07223$ & ${-0.00018}$ & $0.04494$ \\
    \hline
3&    Gini                          & $0.31250$ & ${-0.00096}$ & $0.07044$ & ${0.00097}$ & $0.04836$ & ${-0.00032}$ & $0.02900$ \\
 3&     Power                & $0.97954$ & ${0.00000}$ & $0.00893$ & ${-0.00012}$ & $0.00630$ & ${0.00004}$ & $0.00384$ \\
 3&     Asymmetric  & $0.85714$ & ${-0.00129}$ & $0.05676$ & ${0.00011}$ & $0.03831$ & ${-0.00011}$ & $0.02392$ \\
3&      Maximum                       & $0.65625$ & ${-0.00049}$ & $0.03412$ & ${-0.00006}$ & $0.02400$ & ${-0.00039}$ & $0.01498$ \\
 3&     Minimum                       & $0.68750$ & ${0.00109}$ & $0.06883$ & ${0.00115}$ & $0.04772$ & ${-0.00003}$ & $0.02971$ \\
3&      SCV                           & $0.14286$ & ${0.00120}$ & $0.05708$ & ${0.00057}$ & $0.03816$ & ${-0.00020}$ & $0.02442$ \\
3&      Product                       & $0.85714$ & ${-0.00003}$ & $0.05646$ & ${-0.00027}$ & $0.03854$ & ${-0.00026}$ & $0.02453$ \\
\hline
 5&  Gini                          & $0.24609$ & ${-0.00084}$ & $0.05528$ & ${0.00002}$ & $0.03886$ & ${-0.00006}$ & $0.02406$ \\
 5&     Power   & $0.98761$ & ${0.00005}$ & $0.00558$ & ${-0.00006}$ & $0.00392$ & ${0.00003}$ & $0.00242$ \\
 5&     Asymmetric  & $0.90909$ & ${0.00071}$ & $0.03734$ & ${-0.00051}$ & $0.02653$ & ${-0.00024}$ & $0.01646$ \\
 5&     Maximum                       & $0.62305$ & ${-0.00007}$ & $0.02841$ & ${0.00021}$ & $0.01927$ & ${0.00012}$ & $0.01208$ \\
 5&     Minimum                       & $0.75391$ & ${-0.00014}$ & $0.05609$ & ${-0.00037}$ & $0.03820$ & ${0.00006}$ & $0.02381$ \\
 5&     SCV                           & $0.09091$ & ${-0.00023}$ & $0.03840$ & ${0.00040}$ & $0.02642$ & ${0.00016}$ & $0.01666$ \\
  5&    Product                       & $0.90909$ & ${-0.00074}$ & $0.03742$ & ${-0.00010}$ & $0.02621$ & ${-0.00007}$ & $0.01623$ \\
\hline
  \end{tabular}
\end{table}

\subsection{Estimator sensitivity}\label{sec:sim_chosing_m}

Recall that the estimator $\widehat{I}_{g,m}$ \eqref{eq:estimator} depends on the $m$ value.
In this subsection, we analyze the estimator's sensitivity to increasing the $m$ value.
Table \ref{tab:mc_m_alpha} presents the Bias and RMSE for the estimates of $m$th Gini, Extended $m$th Gini, and Linear Order-Statistic Inequality indices, based on $M = 200$ Monte Carlo simulations from $X \sim \mathrm{Gamma}(\alpha,\lambda=1)$, $\alpha\in\{1,3,5\}$. 
%For the Extended $m$th Gini, the results indicate that, under the same parameter configuration $(j,k)$, the Bias is generally smaller. 

\begin{table}[H]
  \centering
  \caption{Bias and RMSE for the estimates of NPRI index based on Monte Carlo simulations from $X \sim \mathrm{Gamma}(\alpha,\lambda=1)$, with $n\in\{10,20,50\}$, $\alpha\in\{1,3,5\}$ and $M = 200$ replications.}
  \label{tab:mc_m_alpha}
  \setlength{\tabcolsep}{1.0pt}
\begin{tabular}{ccccccccccc}
\hline
&&&& &
\multicolumn{2}{c}{$n=10$} &
\multicolumn{2}{c}{$n=20$} &
\multicolumn{2}{c}{$n=50$} \\
\cline{6-7}\cline{8-9}\cline{10-11}
$\alpha$ & Index & Configurations & m & $I(g)$ &
Bias & RMSE &
Bias & RMSE &
Bias & RMSE \\
\hline
   % 1&  $m$th Gini & --  & $2$ & $0.50000$ & ${0.00148}$ & $0.09517$ & ${0.00084}$ & $0.06511$ & ${-0.00052}$ & $0.04175$ \\
   1 &$m$th Gini & --&$3$ & $0.50000$ & ${0.00295}$ & $0.09631$ & ${0.00019}$ & $0.06598$ & ${-0.00046}$ & $0.04117$ \\
   &   &--                             & $4$ & $0.45833$ & ${-0.00040}$ & $0.09085$ & ${-0.00020}$ & $0.06432$ & ${-0.00184}$ & $0.04085$ \\
    &Extended $m$th Gini & ($1, 2$)  & $3$ & $0.16667$ & ${0.00064}$ & $0.03056$ & ${-0.00000}$ & $0.01883$ & ${-0.00021}$ & $0.01125$ \\
        & &($1, 2$)                & $4$ & $0.08333$ & ${0.00040}$ & $0.02078$ & ${0.00040}$ & $0.01274$ & ${0.00017}$ & $0.00742$ \\
   %&  &($1,3$)                & $4$ & $0.20833$ & ${0.00001}$ & $0.03892$ & ${-0.00001}$ & $0.02424$ & ${-0.00051}$ & $0.01392$ \\
    &  & ($2, 3$)                & $3$ & $0.33333$ & ${0.00118}$ & $0.10026$ & ${-0.00111}$ & $0.06897$ & ${-0.00028}$ & $0.04254$ \\
   &   &($2,3$)                & $4$ & $0.12500$ & ${-0.00005}$ & $0.03142$ & ${-0.00014}$ & $0.01937$ & $-0.00141$ & $0.01088$ \\
   % &  &($2,4$)                & $4$ & $0.37500$ & ${-0.00325}$ & $0.10210$ & ${-0.00100}$ & $0.07142$ & ${-0.00163}$ & $0.03957$ \\
   % &   &($3,4$)                & $4$ & $0.25000$ & ${-0.00058}$ & $0.09791$ & ${0.00012}$ & $0.06437$ & $0.00438$ & $0.04277$ \\\hline
   & Linear Order-Statistic & $(-\tfrac{1}{2},\tfrac{1}{2})$ & $2$ & $0.25000$ & ${-0.00158}$ & $0.04875$ & ${-0.00069}$ & $0.03304$ & ${0.00021}$ & $0.02092$ \\
   & & $(-1,0,1)$                & $3$ & $0.50000$ & ${0.00196}$ & $0.09747$ & ${-0.00084}$ & $0.06553$ & ${0.00003}$ & $0.04147$ \\
   &  & $(-\tfrac{3}{2},-\tfrac{1}{2},\tfrac{1}{2},\tfrac{3}{2})$ & $4$ & $0.75000$ & ${0.00722}$ & $0.14508$ & ${-0.00108}$ & $0.10001$ & ${0.00394}$ & $0.06810$ \\
\hline
%     $m$th Gini ($m=2$)                             & $2$ & $0.31250$ & ${-0.00087}$ & $0.06913$ & ${0.00073}$ & $0.04734$ & ${-0.00029}$ & $0.02885$ \\
 3  & $m$th Gini &--                             & $3$ & $0.31250$ & ${-0.00123}$ & $0.06906$ & ${0.00040}$ & $0.04827$ & ${0.00009}$ & $0.02955$ \\
   &  &--                              & $4$ & $0.28545$ & ${0.00078}$ & $0.06415$ & ${-0.00024}$ & $0.04298$ & ${-0.00052}$ & $0.02397$ \\
   & Extended $m$th Gini &($1, 2$)                & $3$ & $0.12646$ & ${-0.00043}$ & $0.02807$ & ${-0.00033}$ & $0.01820$ & ${-0.00005}$ & $0.01073$ \\
   & &($1, 2$)                & $4$ & $0.07236$ & ${-0.00019}$ & $0.01938$ & ${-0.00006}$ & $0.01212$ & ${0.00046}$ & $0.00750$ \\
   % & &($1, 3$)                & $4$ & $0.15350$ & ${-0.00041}$ & $0.03386$ & ${-0.00015}$ & $0.02173$ & ${0.00006}$ & $0.01272$ \\
      & &($2, 3$)                & $3$ & $0.18604$ & ${0.00015}$ & $0.06002$ & ${-0.00145}$ & $0.03942$ & ${-0.00025}$ & $0.02446$ \\
   & & ($2, 3$)                & $4$ & $0.08114$ & ${-0.00003}$ & $0.02266$ & ${-0.00008}$ & $0.01467$ & ${-0.00074}$ & $0.00865$ \\
   % & & ($2, 4$)                & $4$ & $0.21309$ & ${-0.00029}$ & $0.06335$ & ${-0.00076}$ & $0.04229$ & ${-0.00224}$ & $0.02683$ \\
   % & & ($3, 4$)                & $4$ & $0.13195$ & ${-0.00028}$ & $0.05257$ & ${-0.00033}$ & $0.03553$ & ${-0.00081}$ & $0.02012$ \\
  & Linear Order-Statistic &$(-\tfrac{1}{2},\tfrac{1}{2})$ & $2$ & $0.15625$ & ${0.00078}$ & $0.03463$ & ${-0.00005}$ & $0.02341$ & ${-0.00010}$ & $0.01462$ \\
  &&  $(-1,0,1)$                & $3$ & $0.31250$ & ${-0.00101}$ & $0.06967$ & ${0.00041}$ & $0.04786$ & ${-0.00014}$ & $0.02996$ \\
  &&  $(-\tfrac{3}{2},-\tfrac{1}{2},\tfrac{1}{2},\tfrac{3}{2})$ & $4$ & $0.46875$ & ${-0.00024}$ & $0.10455$ & ${0.00013}$ & $0.07214$ & ${-0.00163}$ & $0.04183$ \\
\hline
%     $m$th Gini ($m=2$)                             & $2$ & $0.24609$ & ${0.00060}$ & $0.05746$ & ${-0.00015}$ & $0.03854$ & ${0.00006}$ & $0.02400$ \\
 5  & $m$th Gini & --                          & $3$ & $0.24609$ & ${0.00015}$ & $0.05594$ & ${-0.00013}$ & $0.03918$ & ${0.00008}$ & $0.02416$ \\
   & &  --                          & $4$ & $0.22467$ & ${0.00031}$ & $0.05223$ & ${0.00057}$ & $0.03552$ & ${0.00056}$ & $0.02223$ \\
   & Extended $m$th Gini & ($1,2$)                & $3$ & $0.10496$ & ${-0.00014}$ & $0.02538$ & ${0.00018}$ & $0.01662$ & ${-0.00002}$ & $0.01007$ \\
      &  & ($1, 2$)                & $4$ & $0.06211$ & ${0.00034}$ & $0.01804$ & ${0.00031}$ & $0.01122$ & ${0.00017}$ & $0.00622$ \\
    & & ($2,3$)                & $3$ & $0.14113$ & ${-0.00084}$ & $0.04446$ & ${-0.00013}$ & $0.03077$ & ${0.00034}$ & $0.01869$ \\
   % &  Ext.~Gini ($m=4$, $j=1$, $k=3$)                & $4$ & $0.12638$ & ${0.00023}$ & $0.03007$ & ${-0.00011}$ & $0.01960$ & ${-0.00063}$ & $0.01207$ \\
   &  & ($2,3$)                & $4$ & $0.06428$ & ${0.00002}$ & $0.01877$ & ${0.00001}$ & $0.01215$ & ${-0.00022}$ & $0.00687$ \\
   % &  Ext.~Gini ($m=4$, $j=2$, $k=4$)                & $4$ & $0.16256$ & ${0.00064}$ & $0.04950$ & ${0.00025}$ & $0.03340$ & ${0.00013}$ & $0.02084$ \\
   % &  Ext.~Gini ($m=4$, $j=3$, $k=4$)                & $4$ & $0.09828$ & ${0.00060}$ & $0.04000$ & ${-0.00023}$ & $0.02616$ & ${-0.00011}$ & $0.01506$ \\
  &  Linear Order-Statistic &$(-\tfrac{1}{2},\tfrac{1}{2})$ & $2$ & $0.12305$ & ${0.00065}$ & $0.02796$ & ${-0.00047}$ & $0.01925$ & ${0.00034}$ & $0.01204$ \\
  && $(-1,0,1)$                & $3$ & $0.24609$ & ${-0.00032}$ & $0.05659$ & ${0.00027}$ & $0.03841$ & ${-0.00013}$ & $0.02363$ \\
  && $(-\tfrac{3}{2},-\tfrac{1}{2},\tfrac{1}{2},\tfrac{3}{2})$ & $4$ & $0.36914$ & ${-0.00124}$ & $0.08421$ & ${-0.00093}$ & $0.05791$ & ${0.00146}$ & $0.03484$ \\
\hline
  \end{tabular}
\end{table}

% Figure \ref{fig:sim_beta_selection} shows the MAE and RMSE results for each $\beta$ in the grid, based on $M = 100$ Monte Carlo samples of the GLE \eqref{eq:gle_ex} under W+CPP, VG, and stable noise settings, with $\boldsymbol{\theta} = (0.2, 0.1)^\top$ and $t_n = 100$. Note that:
% \begin{itemize}
% \item in general, the values of MAE and RMSE decrease to zero as $\beta$ approaches 0.5;
%     \item the choice of $\beta$ should correspond to values greater than or equal to the point at which the MAE or RMSE stabilizes, that is, where further increases in $\beta$ no longer lead to significant changes in the estimation error;
%     \item for small values of $\beta$, the W+CPP noise yielded the highest estimation error, while the VG noise resulted in the lowest. As $\beta$ increases, the estimation errors converge, exhibiting similar behavior across W+CPP, VG, and stable noises;

%     \item the results for $\theta_1$ and $\theta_2$ are similar, in general, the MAE indicates the choice of $\beta\geq 0.25$ while the RMSE indicates to choose $\beta\geq0.15$. 
% \end{itemize}

\subsection{Analysis of the estimator in generalized Gamma populations}

Although Section \ref{sect-3} describes the behavior of the estimator \eqref{eq:estimator} only under the Gamma distribution, in this subsection we discuss its natural extensions to generalized Gamma (GG) populations. 
In this numerical study, the Gini index $I(g, \kappa)$ is computed using the numerical integral
$$ I(g, \kappa) = \frac{1}{\mu}
  \int_0^\infty F_X(x) \left[1-F_X(x)\right]{\rm d}x,$$
where $F_X$ is the cumulative distribution function (CDF) of the random variable $X\sim\operatorname{GG}(\alpha, \lambda, \kappa)$, which is generated by $X = Y^{1/\kappa}/\lambda$ with $Y\sim\operatorname{Gamma}(\alpha/\kappa,1)$, and $\mu=\mathbb{E}(X) = \Gamma\left((\alpha+1)/\kappa\right)/\left[\lambda\,\Gamma(\alpha/\kappa)\right]$.

The results shown in Table \ref{tab:ggamma_gini} support this extension, indicating that the estimator converges even in the case of more general populations. 

\begin{table}[H]
  \centering
  \caption{Bias and RMSE for the estimates os the Gini index based on Monte Carlo simulations from $X\sim\operatorname{GG}(3,1,\kappa)$, with $m=2$, $\kappa\in\{0.5,1,1.5,2,3\}$ and $M=5,000$ replications.}
  \label{tab:ggamma_gini}
\begin{tabular}{cccccccc}%cccc}
\hline
&
  & \multicolumn{2}{c}{$n=10$}
  & \multicolumn{2}{c}{$n=20$}
  & \multicolumn{2}{c}{$n=50$} \\
  %& \multicolumn{2}{c}{$n=100$} \\
\cline{3-4}\cline{5-6}\cline{7-8}%\cline{9-10}
$\kappa$ & $I(g, \kappa)$ &
Bias & RMSE &
Bias & RMSE &
Bias & RMSE \\
%&Bias & RMSE \\
\hline
    $0.5$ & $0.418945$ & ${-0.005888}$ & $0.088859$ & ${-0.004160}$ & $0.062462$ & ${-0.000950}$ & $0.039621$  \\
    %${-0.000376}$ & $0.027533$ & ${-0.000461}$ & $0.019956$ \\
    $1.0$ & $0.312500$ & ${0.002697}$  & $0.068865$ & ${0.000318}$  & $0.047836$ & ${-0.000384}$ & $0.029117$ \\%& ${0.000245}$  & $0.020809$ & ${0.000134}$  & $0.014529$ \\
    $1.5$ & $0.265046$ & ${0.002461}$  & $0.060291$ & ${-0.000161}$ & $0.041440$ & ${0.000685}$  & $0.025677$ \\%& ${0.000246}$  & $0.017918$ & ${-0.000386}$ & $0.012794$ \\
    $2.0$ & $0.237437$ & ${0.002249}$  & $0.054699$ & ${0.000562}$  & $0.038282$ & ${0.000265}$  & $0.023475$ \\%& ${0.000047}$  & $0.016337$ & ${0.000210}$  & $0.011631$ \\
    $3.0$ & $0.206299$ & ${0.002989}$  & $0.051455$ & ${0.002634}$  & $0.034946$ & ${0.000535}$  & $0.021490$ \\%& ${0.000344}$  & $0.014893$ & ${0.000113}$  & $0.010647$ \\
    \hline
  \end{tabular}
\end{table}

\subsection{Computational issues}
We conclude this section by highlighting some computational challenges related to the implementation and validation of the estimator.

 \begin{itemize}
\item  R implementation and required packages: The simulation study was programmed in R software 4.5.1. 
The incomplete gamma function was obtained using the package \textsf{gsl}.
We also used the native R functions \textsf{integrate}, \textsf{combn}, \textsf{pgamma}, and \texttt{rgamma}.

\item The impact of $m$: as $m$ increases, the growth in combinatorial triples can become computationally demanding, making replications of the estimator extremely costly.

 \item Low sample sizes: The computational cost associated with $\binom{n}{m}$ also increases as $n$ grows. Our computational analysis, therefore, focuses on low and medium sample sizes. Moreover, the real data application presented in Section \ref{sec:applications} involves a dataset with a relatively small sample size, which requires the low-sample-size behavior examined here.
 \end{itemize}
These computational challenges and limitations do not compromise the consistent results established for the $\widehat I_{g,m}$.

\section{Applications}\label{sec:applications}
In order to show the versatility of the NPRI estimator, we show that the framework successfully fits the distribution of Gross Domestic Product (GDP) per capita across the Americas. The dataset includes $n = 34$ countries and territories in the Americas in 2023, measured in international dollars at 2021 PPP prices. The data were obtained from \url{https://ourworldindata.org/grapher/gdp-per-capita-worldbank}
(accessed on June 2026) and converted to USD$\times10^3$. The sample spans the full income spectrum of the region, from low-income economies to high-income countries. Figure \ref{fig:mapa} shows the spatial distribution of the data.

\begin{figure}[H]
    \centering
    \includegraphics[width=1.0\linewidth, height=0.5\textheight]{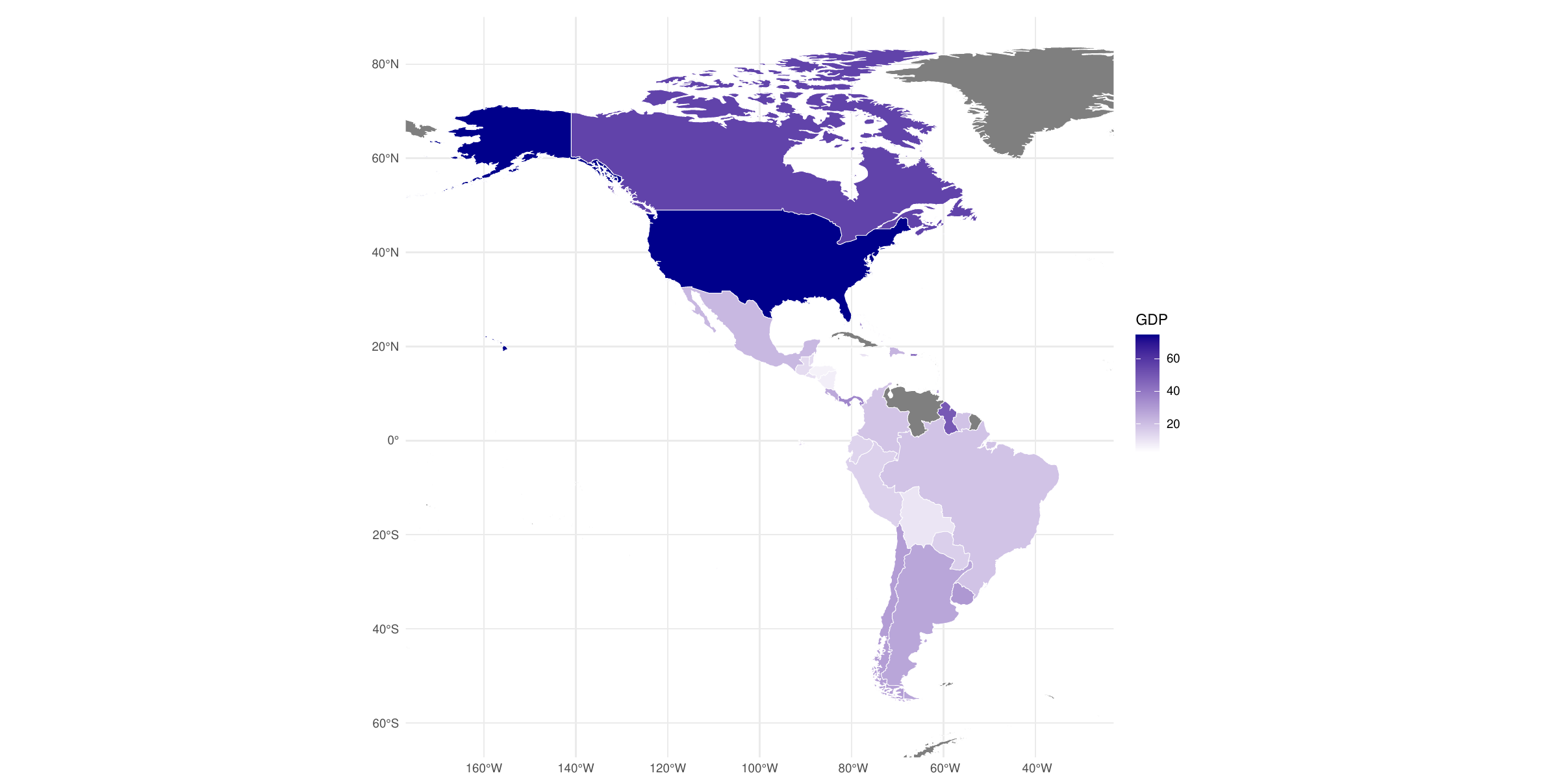}
    \caption{Spatial distribution of income levels in the Americas.}
    \label{fig:mapa}
\end{figure}

\begin{figure}[H]
    \centering
    \includegraphics[width=0.9\linewidth, height=0.3\textheight]{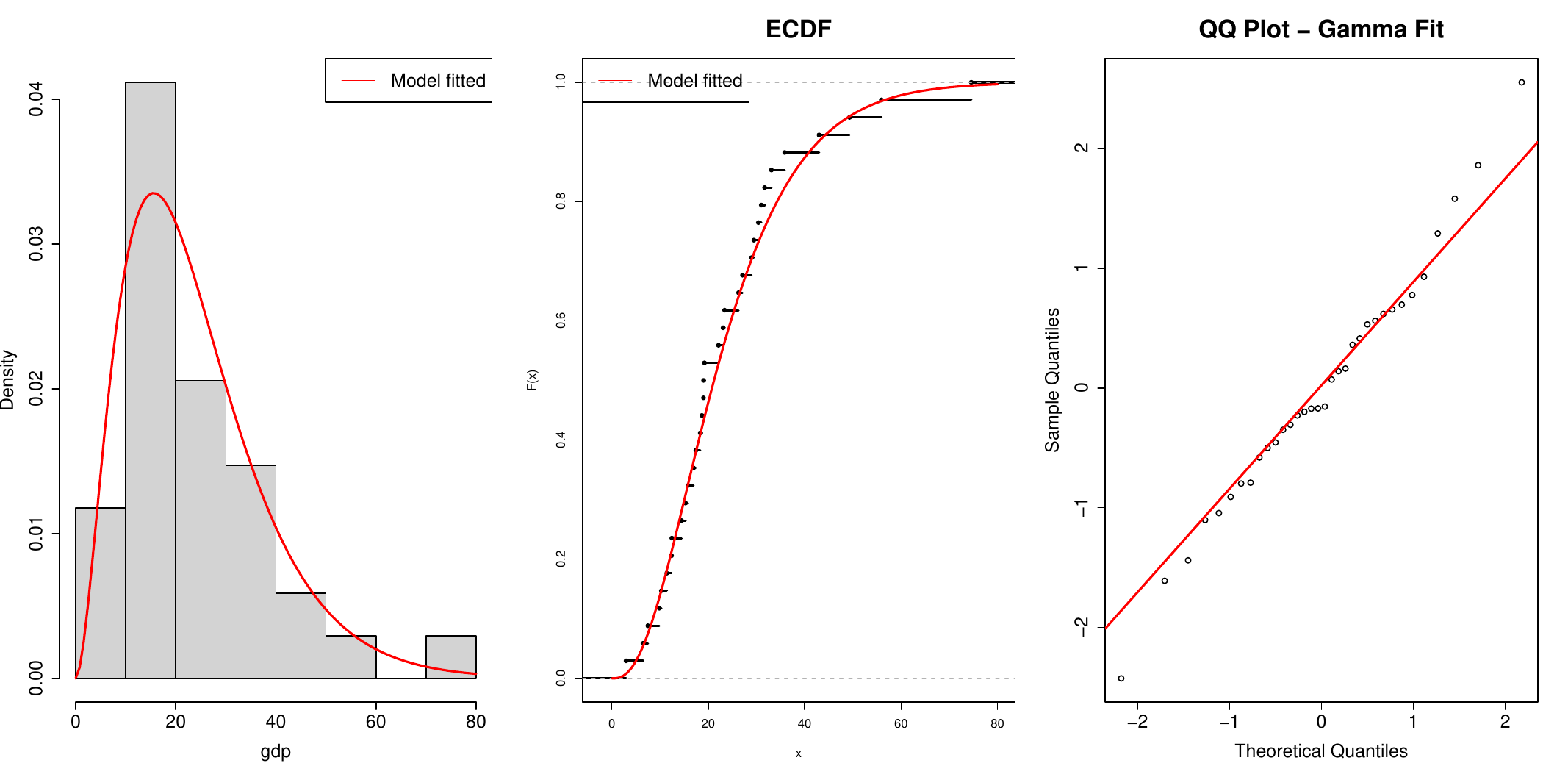}
    \caption{Histograms with fitted Gamma density (left), empirical CDF with fitted Gamma distribution (middle), and Normal QQ plot of randomized residuals from the fitted Gamma model (right).}
    \label{fig:gamma_fit}
\end{figure}

Figure \ref{fig:gamma_fit} shows that the distribution of GDP data is right-skewed. This supports the initial choice of using the Gamma model.
The Gamma model with parameters $\hat{\alpha}=2.876$ and $\hat{\lambda}=0.120$ appears to be a good choice, as indicated by the histogram, the empirical CDF (ECDF), and the randomized residuals.
Parametric bootstrap Kolmogorov–Smirnov ($D = 0.091$, $\operatorname{p-value} = 0.91$), Cramér–von Mises ($\omega^2=0.16924,~ \operatorname{p-value} = 0.9191$), and Anderson-Darling ($A^2=0.5456,~ \operatorname{p-value} = 0.66$) goodness-of-fit tests ensure statistical validity for the Gamma model with parameters estimated from the data.

Based on the GDP data, Table \ref{tab:indices} presents the
estimates \eqref{eq:estimator} for the following NPRIs indices: Gini, Power, Asymmetric, Maximum, Minimum, SCV, and Product. The 95\% Bootstrap confidence intervals (CIs) with $B=5,000$ replications are also reported.
The results for the GDP data indicate a moderate to high level of economic inequality in the Americas, with each of the NPRIs providing a distinct interpretation.

\begin{table}[H]
  \centering
  \caption{Estimates of the NPRI indices for the GDP data and 95\% Bootstrap confidence intervals, with $m=2$ and $B=5,000$ resamples.}
  \label{tab:indices}
  \begin{tabular}{ccc}
    \hline
     Index & $\widehat{I}_{g,2}$ & 95\%CI \\
    \hline
    Gini                                       & $0.3286$ & $[0.2368, 0.3966]$  \\
    Power ($p=0.5$)                            & $0.9777$ & $[0.9673, 0.9880]$  \\
    Asymmetric ($\theta=\beta=1$)              & $0.8456$ & $[0.7838, 0.9124]$  \\
    Maximum                                    & $0.6643$ & $[0.6201, 0.6979]$  \\
    Minimum                                    & $0.6714$ & $[0.6026, 0.7596]$  \\
    SCV                                        & $0.1544$ & $[0.0880, 0.2146]$ \\
    Product                                    & $0.8456$ & $[0.7810, 0.9113]$  \\
    %Indicator ($c=0.5$)                        & $1.00000$ & $0.38466$ & $0.65015$  \\
    \hline
  \end{tabular}
\end{table}

Once the Gamma distribution is fitted to the GDP data, we can compare the estimated values of  $I(g)$ obtained from the non-parametric estimator $\widehat{I}_{g,m}$ \eqref{eq:estimator} with those from the parametric form $I_m(\hat{\alpha}, \hat{\lambda})$ (Table \ref{NPRI-index-1}). In addition, the sensitivity to different values of $m$ can be studied. Table \ref{tab:mGini} presents the estimates for the $m$th Gini, Extended $m$th Gini, and Linear Order-Statistic Inequality indices. Overall, the parametric 95\% CI tends to be narrower.

\begin{table}[H]
  \centering
  \caption{Parametric ${I}_{m}(\hat{\alpha})$ and non-parametric $\widehat{I}_{g,m}$ estimates of $m$th Gini, Extended $m$th Gini, and Linear Order-Statistic Inequality indices for the GDP data, with 95\% Bootstrap confidence intervals ($B=5,000$ resamples), varying $m$, $(j, k)$ and $(a_1,\ldots,a_m)$ configurations.}
  \label{tab:mGini}
\begin{tabular}{ccccccc}
\hline
& & &
\multicolumn{2}{c}{Non-parametric} &
\multicolumn{2}{c}{Parametric} \\
\cline{4-5}\cline{6-7}
Index & $m$ & Configuration &
$\widehat{I}_{g,m}$ & 95\% CI &
$I_m(\hat{\alpha})$ & 95\% CI \\
\hline
  $m$th Gini  &  $3$ &-- &  $0.3286$ & $[0.2372, 0.3943]$ & $0.3186$ & $[0.2470, 0.3776]$  \\
  & $4$ &--&  $0.3029$ & $[0.2204, 0.3636]$  & $0.2910$ & $[0.2247, 0.3454]$ \\
    \hline
   Extended $m$th Gini &  $3$ & $(1,2)$ &  $0.1222$ & $[0.0928, 0.1480]$  &  $0.1283$ & $[0.1048, 0.1439]$  \\
   &  $3$ & $(2,3)$ &  $0.2064$ & $[0.1291, 0.2604]$ &  $0.1903$ & $[0.1409, 0.2327]$  \\
%    &  $3$ & $(1,3)$ &  $0.3286$ & $[0.2396, 0.3957]$  &  $0.3186$ & $[0.2457, 0.3766]$  \\
   &   $4$ & $(1,2)$ &  $0.0708$ & $[0.0503, 0.0875]$  &  $0.0731$ & $[0.0620, 0.0793]$  \\
 &    $4$ & $(2,3)$ &  $0.0770$ & $[0.0571, 0.1003]$    &  $0.0827$ & $[0.0642, 0.0969]$  \\
  &    $4$ & $(3,4)$ &  $0.1551$ & $[0.0859, 0.2005]$   &  $0.13522$ & $[0.0981, 0.1680]$  \\
%   &   $4$ & $(1,4)$ &  $0.3029$ & $[0.2203, 0.3634]$     &  $0.2910$ & $[0.2243, 0.3443]$  \\
    \hline   
Linear Order-Statistic   &  $4$  & $(-1,0,0,1)$  & $0.3029$ & $[0.2180, 0.3647]$ & $0.2910$ & $[0.2252, 0.3451]$ \\
&5&    $(-1, 0, 0, 0, 1)$ & $0.2773$ & $[0.1976, 0.3304]$  & $0.2635$ & $[0.2029, 0.31195]$  \\
\hline
  \end{tabular}
\end{table}

\section{Concluding remarks}\label{sect-5}

This paper introduced a general class of normalized pairwise ratio indices generated by homogeneous functions. The proposed framework unifies a variety of well-known measures, including the Gini coefficient, generalized Gini indices, entropy-based indices, variability measures, and order-statistic functionals.
For gamma populations, explicit expressions were obtained for several members of the class. By exploiting the independence between the total sum and the associated Dirichlet proportions, we derived a simple U-statistic-based estimator and proved its unbiasedness for any NPRI. Thus, a single estimation procedure simultaneously covers a broad family of normalized scale-invariant indices.

Monte Carlo simulations attested to the performance of the analytical closed-form expressions hereby derived. By applying our methodology to real-world GDP data, we can conduct a deeper examination of economic inequalities in the Americas.

Future work may investigate asymptotic properties of the proposed estimator, variance estimation, confidence intervals, and extensions to distributional families beyond the gamma model.

%\clearpage

\paragraph*{Acknowledgements}
The research was supported in part by CNPq and CAPES grants from the Brazilian government.

\paragraph*{Disclosure statement}
There are no conflicts of interest to disclose.

%%%%%%%%%%%%%%%%%%%%%%%%%%%%%%%%%%%%%%%%%%%%%%%%%%%%%%%%%%%%%

%\bibliographystyle{unsrt}
\bibliographystyle{apalike}

\end{document}